\renewcommand{\arraystretch}{1.5}
\numberwithin{equation}{section}
\numberwithin{figure}{section}
\def\eq#1{(\ref{eq:#1})}
\def\lineup{\!\!\!\!\!\!\!\!&&}
\newcommand{\Tr}{\mathop{\rm Tr}\nolimits}
\def\d{\partial}
\def\eps{\epsilon}
\def\fraction#1#2{ { \textstyle \frac{#1}{#2} }}
\def\half{\fraction{1}{2}}
\newtheorem{theorem}{Theorem}[section]
\newtheorem*{``theorem''}{Theorem}
\begin{document}

\begin{titlepage}

\begin{center}

\vskip 1.0cm {\large \bf{The Identity String Field and}}\\ 
\vskip .2cm
{\large \bf{the Sliver Frame Level Expansion}}\\
\vskip 1.5cm
{\large Theodore Erler\footnote{Email: tchovi@gmail.com}}

\vskip 1.0cm

{\it {Institute of Physics of the ASCR, v.v.i.} \\
{Na Slovance 2, 182 21 Prague 8, Czech Republic}}\\
\vskip 1.5cm

{\bf Abstract}
\end{center}
We propose a modified version of the sliver-frame level expansion which 
gives a tool for analyzing singularities related to the identity 
string field. We apply this formalism to the newly discovered solutions of 
Masuda, Noumi, and Takahashi.

\noindent

\noindent
\medskip

\end{titlepage}

\tableofcontents

\section{Introduction}

In an interesting recent paper \cite{MNT}, Masuda, Noumi and Takahashi (MNT) 
discovered an analytic solution in open string field theory which 
seems to describe a D-brane with {\it negative} tension 
(what might be called a ``ghost D-brane''\footnote{The solutions proposed
in \cite{MNT} are different from the ``ghost D-brane'' 
solutions discussed in \cite{multiple,Integra}.} \cite{Okuda}). The solution, 
while interesting, is physically problematic. One possible way 
out, noted in \cite{MNT}, is that the ghost-brane solution has a term 
proportional to the identity string field, which might be considered 
singular. While there has been much discussion of singularities in analytic
solutions related to the sliver state (see, for example, 
\cite{multiple,Integra,lumps,dtakahashi,Hata,phantom}), so far there 
has been limited discussion of singularities related to 
the identity string field.\footnote{Some recent discussion of the identity
string field in connection with the tachyon vacuum solution $c(1-K)$ appears
in \cite{id1,id2,id3}.}  Indeed, many useful analytic solutions
in open string field theory (e.g. \cite{simple,bcc}) are quite 
``identity-like,'' and not obviously less singular than the ghost brane of MNT.

In this paper we hope to clarify this situation. 
Our results can be summarized as follows:
\begin{itemize}
\item For fields in the subalgebra of wedge states with insertions, 
we can define a version of the sliver frame level expansion 
called the {\it dual $\mathcal{L}^-$ level expansion}. It is similar
to the $\mathcal{L}_0$ level expansion of Schnabl \cite{Schnabl}, except 
fields are expanded around $K=\infty$ rather 
than $K=0$. This allows us to formulate the following criterion:
A solution is too identity-like if its {\it highest} level in the dual 
$\mathcal{L}^-$ level expansion is zero or positive.    
\item We show that solutions in the $KBc$ subalgebra \cite{Okawa,SSF1} 
can be partitioned into exactly six distinct gauge orbits, and all but two 
of these (the tachyon vacuum and the perturbative vacuum) are necessarily 
represented by solutions with unacceptable behavior with respect to the 
identity string field. This means, in particular, that there is no regular
solution for the MNT ghost-brane background in the $KBc$ subalgebra.
\end{itemize}

\section{$\mathcal{L}^-$ Level Expansion}

In this paper we work with the subalgebra of wedge states with insertions 
\cite{Schnabl,RZ_wedge,Schnabl_wedge}. This subalgebra is generated by 
taking sums and products of the string fields \cite{Okawa,SSF1}
\begin{equation}K,\ \ \ \ B,\ \ \ \ \phi_i\ \ (i=1,2,...).\end{equation}
We define $K$ and $B$ following the conventions of 
\cite{simple}. The string fields $\phi_i$ represent insertions of boundary
operators $\phi_i(z)$ in correlation functions on the cylinder\footnote{See 
appendix A of \cite{simple} for some background on 
the relation between string fields and operator insertions in correlation 
functions on the cylinder.}. Explicitly they can be defined
\begin{equation}\phi_i \equiv f_\mathcal{S}^{-1}\circ\phi_i(\half)|I\rangle,
\end{equation}
where $f_\mathcal{S}^{-1}(z)=\tan(\pi z/2)$ is the inverse of the sliver 
coordinate map \cite{RZO} and $|I\rangle$ is the identity string field (which
we will often write simply as ``$1$'' when no confusion can arise). 
We assume that the $\phi_i(z)$s have have definite scaling dimension 
$h_i$. Recall that the string fields $K$ and $B$
satisfy the identities \cite{Okawa}
\begin{equation} K = QB,\ \ \ \ [K,B]=0,\ \ \ \ B^2=0,\label{eq:KB_id}
\end{equation}
and $K$ generates the algebra of wedge states \cite{RZ_wedge}, 
in the sense that any star algebra power of the $SL(2,\mathbb{R})$ 
vacuum $\Omega\equiv|0\rangle$ can be written $\Omega^\alpha= e^{-\alpha K}$.

Consider the reparameterization generator $\mathcal{L}^-$, which is the 
BPZ odd component of the scaling generator in the sliver coordinate frame 
\cite{Schnabl,RZ}:
\begin{equation}\mathcal{L}_0\equiv f_\mathcal{S}^{-1}\circ L_0.\end{equation}
Specifically
\begin{equation}\mathcal{L}^- \equiv \mathcal{L}_0-\mathcal{L}_0^\star,
\end{equation}
where $^\star$ denotes BPZ conjugation. The reparameterization $\mathcal{L}^-$
generates scale transformations of the algebra of wedge states with insertions.
Acting on the elementary fields $K,B$ and $\phi_i$, it computes (twice) the
scaling dimension of the corresponding operator insertion on the cylinder:
\begin{equation}\half\mathcal{L}^- K= K,\ \ \ \ 
\ \ \ \ \half\mathcal{L}^- B = B,\ \ \ \ \ \ \ \ 
\half\mathcal{L}^-\phi_i = h_i\phi_i.\end{equation}
Then, since $\mathcal{L}^-$ is a derivation, this defines the action
of $\mathcal{L}^-$ on the whole subalgebra. 

We can decompose a string field into $\mathcal{L}^-$ eigenstates by 
expanding the field as a formal power series in $K$ around $K=0$. (To 
do this we assume that the $\phi_i$s have regular OPEs, otherwise this 
expansion can produce contact divergences.) For example, consider the zero 
momentum tachyon ground state 
\begin{equation}\sqrt{\Omega}\, c\, \sqrt{\Omega} = \frac{2}{\pi}c(0)|0\rangle,
\end{equation}
where $\sqrt{\Omega}$ is the square root of the $SL(2,\mathbb{R})$ vacuum. 
Expanding this around $K=0$ gives the expression
\begin{eqnarray}\sqrt{\Omega}\, c\, \sqrt{\Omega} \lineup
=\exp\left(-\frac{K}{2}\right)\,c\,\exp\left(-\frac{K}{2}\right)\nonumber\\
\lineup = c-\frac{1}{2}(cK+Kc)+\frac{1}{8}(K^2 c +2KcK+cK^2)-...\ \ .
\end{eqnarray}
Each term in the series is an $\mathcal{L}^-$ eigenstate. For a general 
state in the subalgebra of wedge states with insertions, expanding around $K=0$
produces an expression of the form
\begin{equation}\Phi = \Phi_{h_1}+\Phi_{h_2}+\Phi_{h_3}+...\ \ \ \ \ \ \ 
(h_1<h_2<h_3<...),
\end{equation} 
where $\Phi_{h_n}$ are eigenstates of $\mathcal{L}^-$:
\begin{equation}
\half \mathcal{L}^-\Phi_{h_n} = h_n\Phi_{h_n}.\end{equation}
This defines what we call the $\mathcal{L}^-$ {\it level expansion}. We use 
``level'' to refer to the $\half\mathcal{L}^-$ eigenvalue in this expansion.
 Assuming that products of $\phi_i$s do not produce operators of 
arbitrarily negative conformal dimension, the level is bounded from below,
and higher level states are considered ``subleading.'' We say ``subleading''
in quotes since the $\mathcal{L}^-$ level expansion is very formal: Each term
in the expansion is proportional to the identity string field, and there is 
at least one sense (explained in the next section) that terms in the 
expansion become increasingly singular as the level is increased. 

The $\mathcal{L}^-$ level expansion can be understood as a variant of the 
$\mathcal{L}_0$ level expansion of Schnabl \cite{Schnabl}. They are 
related through the formula
\cite{simple}
\begin{equation}\mathcal{L}_0\left(\sqrt{\Omega}\,\Phi\sqrt{\Omega}\right)
=\sqrt{\Omega}\left(
\half\mathcal{L}^-\Phi \right)\sqrt{\Omega}.\end{equation}
This means that the $\mathcal{L}_0$ level expansion of the field 
$\sqrt{\Omega}\,\Phi\sqrt{\Omega}$ is equivalent to the 
$\mathcal{L}^-$ level expansion of the field $\Phi$; The coefficients of 
the eigenstates are the same in either expansion. 
One important difference, however, is that the eigenstates of the 
$\mathcal{L}_0$ level expansion are proportional to the $SL(2,\mathbb{R})$ 
vacuum rather than the identity string field. This makes it possible to 
calculate the energy of solutions in the $\mathcal{L}_0$ level expansion 
\cite{simple,Schnabl,Aldo1,Aldo2}, whereas this is impossible in the 
$\mathcal{L}^-$ level expansion.

The $\mathcal{L}^-$ level expansion is closely related to the 
{\it phantom term} in string field theory solutions. The phantom term 
between solutions $\Phi_1$ and $\Phi_2$ can be written schematically
\begin{equation}X^\infty (\Phi_2-\Phi_1),\end{equation}
where $X^\infty$ is the boundary condition changing projector for a singular 
gauge transformation connecting $\Phi_1$ and $\Phi_2$ \cite{Integra,phantom}.
Typically, $X^\infty$ is proportional to the sliver state, and when calculating
contractions involving the phantom term the solutions $\Phi_1$ and $\Phi_2$ 
look, by comparison to $X^\infty$, like operator insertions on the 
identity string field. Moreover, in correlation functions on a very large 
cylinder, operator insertions with the lowest scaling dimension make the 
leading contribution. Therefore, inside the phantom term the solutions 
$\Phi_1$ and $\Phi_2$ are naturally described by the $\mathcal{L}^-$ 
level expansion.

Let us make a technical comment. In order to implement the $\mathcal{L}^-$ 
level expansion in the subalgebra of wedge states with insertions, we need to 
make some assumptions about how the string field depends on $K$. 
First, it must be formally an {\it analytic} function of $K$ at $K=0$, 
and second, the power series expansion in $K$ at $K=0$ must 
uniquely characterize the string field, if it exists. The 
first assumption is needed otherwise the expansion at $K=0$ does not 
necessarily produce eigenstates of $\mathcal{L}^-$. For example, the field 
$K\ln(K)$ does not have a well-defined $\mathcal{L}^-$ level expansion. 
Moreover, it is known that certain singularities at $K=0$---particularly 
poles \cite{lumps,multiple,multi_bdry}---should not be allowed for regular
string fields, though it is not clear whether complete analyticity at $K=0$ is 
required. As for the second assumption, the $\mathcal{L}^-$ level expansion
uniquely describes the string field as long as elements of the wedge algebra 
are described as Laplace transforms over wedge states 
\cite{SSF2}. More exotic functions of $K$ have been constructed in 
\cite{exotic}, but it is not clear whether such states have well-defined 
star products. At any rate, these assumptions hold for nearly all states which 
are normally considered in applications, and we will assume that they hold 
for the remainder of the paper.

\section{Dual $\mathcal{L}^-$ Level Expansion}

With this background, we are ready to discuss our main interest: understanding 
singularities in solutions related to the identity string 
field. Specifically, we want to answer the following question: 
If $\Phi$ is a string field in the subalgebra of wedge states with 
insertions, under what circumstances is the quantity\footnote{We denote 
the 1-string vertex by the trace: $\Tr[\Phi]\equiv \langle I|\Phi\rangle$} 
\begin{equation}\Tr[\Phi]\end{equation}
well-defined from the perspective of the identity string field?

To start, consider the trace of a product of local insertions
\begin{equation}\Tr[\phi_1\phi_2...\phi_n].\label{eq:unreg_phi}\end{equation}
Since the $\phi_i$s are proportional to the identity string field, the
trace represents a correlation 
function on a cylinder with vanishing area (see figure \ref{fig:IdSing1}). 
To make sense of this, we can introduce a 
point-splitting regulator, separating the $\phi_is$ with wedge states of 
non-zero width: 
\begin{equation}\Tr[\phi_1\, \Omega^{\eps t_1}\, \phi_2\, 
\Omega^{\eps t_2}\, ...
\,\phi_n\,\Omega^{\eps t_n}].\label{eq:reg_phi}\end{equation}
Formally, this reproduces \eq{unreg_phi} in the $\eps\to 0$ limit. Now note 
that \eq{reg_phi} only depends on $\eps$ through the total scaling dimension
of the insertions:
\begin{eqnarray}\Tr[\phi_1\, \Omega^{\eps t_1}\, \phi_2\, 
\Omega^{\eps t_2}\, ...
\,\phi_n\,\Omega^{\eps t_n}]\lineup 
= \left(\frac{1}{\eps}\right)^{h_1+...+h_n}
\Tr\left[\eps^{\frac{1}{2}\mathcal{L}^-}\left(\phi_1\, \Omega^{t_1}\, 
...
\,\phi_n\,\Omega^{t_3}\right)\right]\nonumber\\
\lineup = \left(\frac{1}{\eps}\right)^{h_1+h_2+...+h_n}
\Tr\left[\phi_1\, \Omega^{t_1}\, \phi_2\, \Omega^{t_2}
...
\,\phi_n\,\Omega^{t_n}\right],
\label{eq:reg_phi2}\end{eqnarray}
where we used reparameterization invariance of the vertex. 
Now let's take $\eps$ to zero. The limit is clearly divergent if 
$h_1+...+h_n$ is positive. In fact, even if $h_1+...+h_n=0$ the limit 
is singular because the answer depends on the choice of parameters 
$t_1,...,t_n$ used to split the operators. Only if $h_1+...+h_n$ is 
negative do we find a well defined result:
$\Tr[\phi_1...\phi_n]=0$. With an eye towards 
generalization, we can summarize these 
observations (in a somewhat technical fashion) as follows: 
\begin{theorem} \label{thm:pre_Lm} 
If $\Phi$ is a state created by taking sums and products 
of local insertions $\phi_i$, then $\Tr[\Phi]$ is well defined only if the
highest level in the $\mathcal{L}^-$ level expansion of $\Phi$ is strictly 
negative.\end{theorem}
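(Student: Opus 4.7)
The plan is to generalise the point-splitting/scaling argument in the paragraph preceding the theorem from a single monomial $\phi_1\cdots\phi_n$ to an arbitrary polynomial $\Phi$ in the $\phi_i$'s. Throughout I would argue the contrapositive: if the highest level in the $\mathcal{L}^-$ expansion of $\Phi$ is non-negative, then $\Tr[\Phi]$ cannot be made regulator-independent.

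First I would decompose $\Phi$ into $\mathcal{L}^-$-eigenstates. Since $\mathcal{L}^-$ is a derivation with $\tfrac{1}{2}\mathcal{L}^-\phi_i=h_i\phi_i$, every monomial $\phi_{i_1}\cdots\phi_{i_k}$ is already an eigenstate of level $h_{i_1}+\cdots+h_{i_k}$. Grouping monomials of equal total dimension writes $\Phi$ as a \emph{finite} level sum $\Phi=\Phi_{h_1}+\cdots+\Phi_{h_N}$ with $h_1<\cdots<h_N$; because there are no powers of $K$ to resum, the level expansion here is literally the sorted polynomial in the $\phi_i$'s.

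Next I would point-split every monomial by inserting $\Omega^{\eps t_j}$ between consecutive local operators, with positive parameters $t_j$ chosen monomial by monomial, and apply the scaling identity \eq{reg_phi2} term by term. This organises the regulated trace as
\begin{equation*}
\Tr[\Phi]_{\eps} \;=\; \sum_{n=1}^{N} \eps^{-h_n}\,F_n(\{t_j\}),
\end{equation*}
a finite Laurent polynomial in $\eps$ with each $F_n$ a finite sum of conventionally regulated cylinder correlators of the monomials in $\Phi_{h_n}$. Sending $\eps\to 0$: if $h_N>0$ the leading term $\eps^{-h_N}F_N$ diverges; if $h_N=0$ the leading term is finite but $F_N(\{t_j\})$ depends nontrivially on the splitting parameters, giving a regulator-dependent answer; only if $h_N<0$ does every term carry a positive power of $\eps$ and vanish, yielding the unambiguous value $\Tr[\Phi]=0$.

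The delicate step is the marginal case $h_N=0$, where I have to rule out accidental $t_j$-independence of $F_N$. I expect to handle this by interpreting $F_N$ as a correlator of weight-zero operators on a cylinder whose total modulus is set by $\sum_j t_j$: a genuinely $t_j$-independent correlator of such operators must equal its infinite-separation value, which by cluster decomposition on the cylinder vanishes identically, collapsing the case back into the genuinely convergent regime already covered. The remaining steps are routine scaling bookkeeping against the regulator \eq{reg_phi}.
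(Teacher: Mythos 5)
Your proposal follows essentially the same route as the paper: the paper's justification of this theorem is precisely the point-splitting regulator \eq{reg_phi} together with the reparameterization/scaling identity \eq{reg_phi2} and the resulting three-way case analysis on the sign of the total scaling dimension, carried out for a single monomial and extended to sums by linearity exactly as you do. The only place you go beyond the paper is the cluster-decomposition argument for the marginal case $h_N=0$, which the paper simply asserts as regulator dependence; that embellishment is not needed at the paper's level of rigor (and as written it is shaky, since the insertions in $\Phi_{h_N}$ have total weight zero but are not individually weight-zero operators).
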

\noindent At the moment this statement is somewhat uninteresting, since 
for a fixed set of $\phi_i$s there are generally few, if any, ways to create 
negative level states.

\begin{figure}
\begin{center}
\resizebox{2.5in}{2.5in}{\includegraphics{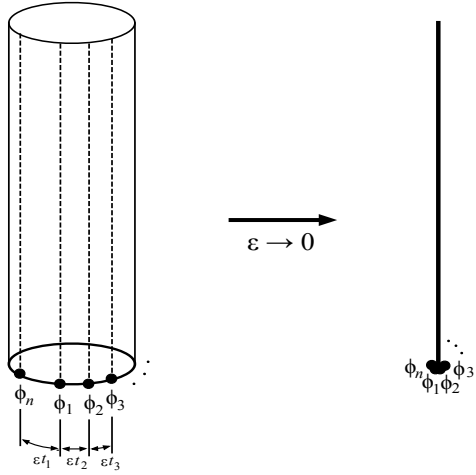}}
\end{center}
\caption{\label{fig:IdSing1} Regulating the trace of a product of local 
insertions.}\end{figure}

We would like to find an analogue of this theorem for arbitrary 
fields in the subalgebra of wedge states with insertions. As a first step, 
consider the trace of a state with a single insertion:
\begin{equation}\Tr[F(K)\phi].\label{eq:ex2}\end{equation}
Assuming that $F(K)$ can be written as a Laplace transform of some 
$f(t)$, we can write
\begin{equation}\Tr[F(K)\phi]=\int_0^\infty dt\, f(t)
\Tr[\Omega^t\phi].\end{equation}
Note that the identity string field corresponds to the $t\to 0$ limit in the 
integrand. Using a reparameterization we can factor the trace out of the 
integration:
\begin{equation}\Tr[F(K)\phi]=\Tr[\Omega\phi]
\int_0^\infty dt\, \left(\frac{1}{t}\right)^{h}f(t).\label{eq:ex22}
\end{equation}
The integration is finite near $t=0$ if
\begin{equation}\lim_{t\to 0}\left(\frac{1}{t}\right)^{h-1}f(t)=0.
\label{eq:cond1}\end{equation}
Under this condition $\Tr[F(K)\phi]$ is well-defined from the perspective
of the identity string field. This does not mean that $\Tr[F(K)\phi]$ is 
necessarily finite. For this to be true, the full integration from 
$t=0$ to $\infty$ must be finite in \eq{ex22}, not just the integration in the
neighborhood of $t=0$.

Let's restate condition \eq{cond1} in a form analogous
to theorem \ref{thm:pre_Lm}.  
To start, let's suppose that $f(t)$ can be written as a power series 
expansion around $t=0$:
\begin{eqnarray}f(t)\lineup 
= \sum_{n=1}^\infty a_n t^{\mu_n} + r(t)\nonumber\\
\lineup = a_1 t^{\mu_1} + a_2 t^{\mu_2}+...+r(t)\ \ \ \ \ \ 
(-1<\mu_1<\mu_2<....),
\label{eq:exp_t}\end{eqnarray}
where $a_n$ are coefficients, $\mu_n$ are an increasing sequence of 
powers (not necessarily integers), and $r(t)$ is a remainder which 
vanishes faster than any power at $t=0$. Condition \eq{cond1} implies that
$\mu_1$, the leading power in this expansion, satisfies the inequality
\begin{equation}\mu_1+1>h.\label{eq:cond2}\end{equation}
Now suppose that we take the expansion \eq{exp_t} to define an expansion of 
the string field $F(K)$:
\begin{equation}F(K) 
= \sum_{n=1}^\infty a_n 
\left(\int_0^\infty dt\, t^{\mu_n}\Omega^t\right) 
+R(K),\label{eq:exp_K}\end{equation}
where $R(K)$ is the Laplace transform of $r(t)$. There are two immediate 
problems with this expression. First, the series for $f(t)$ 
might only have finite radius of convergence, so \eq{exp_K} might only be 
an asymptotic expansion. Second, and perhaps more disturbingly, each term 
in \eq{exp_K} is actually a divergent string field. The integrand is 
completely unsuppressed for large $t$, so the formal expansion 
\eq{exp_K} produces a sequence of increasingly severe divergences 
proportional to the sliver state. However, for our purposes this is not a
problem. Our primary interest is the identity string 
field, not the sliver state. If the string field is well-defined, the 
sliver divergences will cancel upon formal resummation of \eq{exp_K}. 
Note that the divergent integrals in \eq{exp_K} actually define inverse 
powers of $K$ in the Schwinger parameterization:
\begin{equation}\int_0^\infty\frac{t^{\mu}}{\Gamma(\mu+1)}\Omega^t
\equiv \frac{1}{K^{\mu+1}}.\label{eq:Ltrans}\end{equation}
Therefore \eq{exp_K} can be reexpressed in the simple form
\begin{equation}F(K) = \sum_{n=1}^\infty\frac{b_n}{K^{\nu_n}}+R(K),
\label{eq:FKexp2}\end{equation}
where 
\begin{equation}\nu_n \equiv \mu_n+1,\ \ \ \ \ \ b_n \equiv 
\Gamma(\mu_n+1)a_n.\end{equation} 
It is natural to interpret \eq{FKexp2} as an expansion of $F(K)$ around 
$K=\infty$. In fact, \eq{FKexp2} looks like an expansion in $\mathcal{L}^-$ 
eigenstates, in the sense that we should define
\begin{equation}\half \mathcal{L}^- \frac{1}{K^\nu} = -\nu \frac{1}{K^\nu},
\label{eq:geneig}\end{equation}
by analogy to the formula $\half\mathcal{L}^-K^n = nK^n$ for positive integer
$n$.  By convention, we will say that $R(K)$ has $\mathcal{L}^-$ 
eigenvalue $-\infty$, since it vanishes faster than any inverse power of $K$
towards infinity. Therefore, we can interpret \eq{FKexp2} as an unusual 
form of the $\mathcal{L}^-$ level expansion.

For a general field $\Phi$ in the algebra of wedge states with insertions, a 
formal expansion around $K=\infty$ may produce an expression of the form
\begin{equation}\Phi =\Phi_{h_1}+\Phi_{h_2}+\Phi_{h_3}+...\ \ \ \ \ \ 
(h_1>h_2>h_3> ...),\end{equation}
where $\Phi_{h_n}$ are eigenstates of $\mathcal{L}^-$:
\begin{equation}\half\mathcal{L}^- \Phi_{h_n} = h_n\Phi_{h_n}.\end{equation}
This is what we call the {\it dual $\mathcal{L}^-$ level expansion}.
We use ``level'' to refer to the $\half\mathcal{L}^-$ eigenvalue in this 
expansion. The level can be taken as a precise 
measure of how identity-like a string field is. 
The identity string field itself is level $0$; progressively negative 
levels become less singular from the perspective of the identity string 
field, and progressively positive levels are increasingly more singular 
than the identity string field. Though the dual $\mathcal{L}^-$ level 
expansion looks similar to the $\mathcal{L}^-$ level expansion defined 
earlier, there are a number of important differences:
\begin{enumerate}
\item[1)] The leading level in the dual $\mathcal{L}^-$ level expansion is 
the highest level. Subleading levels are increasingly negative. In the 
$\mathcal{L}^-$ level expansion, the situation is opposite: The 
leading level is the lowest level, and subleading levels are increasingly 
positive.
\item[2)] Eigenstates in the dual $\mathcal{L}^-$ level expansion become 
increasingly singular from the perspective of the sliver state as the level 
becomes progressively negative, whereas the eigenstates in the
$\mathcal{L}^-$ level expansion become increasingly singular from the
perspective of the identity string field as the level becomes progressively 
positive.
\item[3)] Since the dual $\mathcal{L}^-$ level expansion respects the short
distance structure of the string field, it does not produce collisions of 
$\phi_i$s if they are not present in the state to begin with. 
Therefore it is not necessary to assume that the $\phi_i$s have regular OPEs 
when implementing the dual $\mathcal{L}^-$ level expansion, whereas this 
assumption appears necessary for the $\mathcal{L}^-$ (or $\mathcal{L}_0$) 
level expansion.
\end{enumerate}
\noindent 
In a sense, the two expansions give complementary 
information about the string field. For example, the states
\begin{equation}\Omega\ \ \ \ \ \ \ \Omega^{-1}\end{equation}
look very similar in the $\mathcal{L}^-$ level expansion as a power series 
around $K=0$. But in the dual $\mathcal{L}^-$ level expansion the inverse 
wedge state is clearly singular since it diverges at $K=\infty$ faster 
than any power of $K$. On the other hand, consider the states
\begin{equation}\frac{1-\Omega}{K}\ \ \ \ \ \ \ 
\frac{2-\Omega}{K}.\end{equation}
These states look similar in the dual $\mathcal{L}^-$ expansion, 
but in the $\mathcal{L}^-$ level expansion the first state is regular, 
while the second has a pole at $K=0$ which produces a sliver divergence. 
In summary, the dual $\mathcal{L}^-$ level expansion is sensitive to 
singularities related to the identity string field, but not the sliver state,
while precisely the opposite is true in the $\mathcal{L}^-$ level expansion.

Let us make a technical comment: Fields in the subalgebra of 
wedge states with insertions are not necessarily analytic functions of $K$ 
around $K=\infty$. This is why noninteger powers of $K$, and the remainder 
$R(K)$, can appear in \eq{FKexp2} while (by assumption) no such terms appear
in the expansion around $K=0$. But this means that the expansion around 
$K=\infty$ can produce terms (for example $\frac{1}{\ln K}$) which are not 
$\mathcal{L}^-$ eigenstates, and are not accounted for in \eq{FKexp2}. In this
situation the field does not have a dual $\mathcal{L}^-$ 
level expansion, though much of our discussion can be extended to such 
examples. At any rate, in most applications the string field
does have a dual $\mathcal{L}^-$ level expansion. For illustrative purposes, 
we have listed the $\mathcal{L}^-$ and dual $\mathcal{L}^-$ level expansions 
of some commonly encountered states in table \ref{tab:Lm}.

\begin{table}[t]
\renewcommand{\arraystretch}{1.0}
\begin{center}
\begin{tabular}{|c|c|c|}
\hline 
& $\mathcal{L}^-$ expansion & dual $\mathcal{L}^-$ expansion \\
\hline & & \\
$\Omega$ & 
$\displaystyle 1-K+\frac{K^2}{2!} - \frac{K^3}{3!} +...$ &  
$ R(K)\ \ \ \ \ R(K)=\Omega$ \\
& &\\
\hline
& &\\
$\displaystyle\frac{1}{1+K}$ & 
$1-K+K^2-K^3+...$ & 
$\displaystyle \frac{1}{K}-\frac{1}{K^2}+\frac{1}{K^3}-...$\\
& &\\
\hline
& &\\
$1+K$ & $1+K$ & $K+1$ \\
& &\\
\hline
& &\\
$\displaystyle \frac{1}{\sqrt{1+K}}$ & 
$\displaystyle \ \ \ 1-\frac{1}{2}K + \frac{3}{8}K^2 - \frac{5}{16}K^3+...
\ \ \ $ &
$\displaystyle\ \ \ \frac{1}{K^{1/2}}-\frac{1}{2}\frac{1}{K^{3/2}}+
\frac{3}{8}\frac{1}{K^{5/2}}-\frac{5}{16}\frac{1}{K^{7/2}}+...\ \ \ $ \\
& &\\
\hline
& &\\
$\displaystyle \frac{1-\Omega}{K}$ & 
$\displaystyle 1 -\frac{K}{2!} + \frac{K^2}{3!} -\frac{K^3}{4!}+...$ & 
$\displaystyle \frac{1}{K}+R(K)\ \ \ \ R(K) = -\frac{1}{K}\Omega$\\
& &\\
\hline
& &\\
$\displaystyle \frac{K}{1-\Omega}$ & 
$\displaystyle 1 -\frac{K}{2} + \frac{K^2}{12} -\frac{K^4}{720}+...$ & 
$\displaystyle K+R(K)\ \ \ \  R(K) = \frac{K\Omega}{1-\Omega}$\\
& &\\
\hline
\end{tabular}
\end{center}
\caption{\label{tab:Lm} $\mathcal{L}^-$ and dual $\mathcal{L}^-$ level 
expansions of various states in the wedge algebra.} 
\end{table}

Now let us return to the original question, which was to find an alternative
expression for the regularity condition \eq{cond1}. Via equation \eq{cond2},
it is clear that \eq{cond1} imposes a constraint on the leading power $\nu_1$
in the expansion of $F(K)$ around $K=\infty$:
\begin{equation}h-\nu_1<0.\end{equation}
The quantity $h-\nu_1$ is the highest level in the dual $\mathcal{L}^-$ level
expansion of $F(K)\phi$. Recalling theorem \ref{thm:pre_Lm}, this 
suggests the general result:
\begin{theorem} \label{thm:Lm} 
Let $\Phi$ be a state in the subalgebra of wedge states with insertions that 
admits a dual $\mathcal{L}^-$ level expansion. Then $\Tr[\Phi]$ is well 
defined only if the highest level in the dual $\mathcal{L}^-$ level 
expansion of $\Phi$ is strictly negative.\end{theorem}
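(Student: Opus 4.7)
The plan is to extend the single-insertion calculation carried out just before the theorem statement---equations \eq{ex2} through \eq{cond2}---to arbitrary monomials in the subalgebra, and then recover the general statement by linearity. Since $\Phi$ lies in the subalgebra generated by $K$, $B$, and the $\phi_i$, I would first write it as a linear combination of monomials of the form $F_0(K)\,\psi_{i_1}\,F_1(K)\cdots\psi_{i_n}\,F_n(K)$, where each $\psi_{i_j}$ denotes either a $\phi_{i_j}$ or a factor of $B$. The latter carries level $1$ and is treated like a primary of dimension $h=1$ for the counting; the relations $B^2=0$ and $[K,B]=0$ ensure that at most one $B$ appears in each monomial.

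For a single such monomial I would substitute the Schwinger representation $F_j(K)=\int_0^\infty dt_j\,f_j(t_j)\,\Omega^{t_j}$ from \eq{Ltrans} to obtain
\begin{equation*}
\Tr[\Phi]=\int_0^\infty dt_0\cdots dt_n\,f_0(t_0)\cdots f_n(t_n)\,\Tr\!\left[\Omega^{t_0}\psi_{i_1}\Omega^{t_1}\cdots\psi_{i_n}\Omega^{t_n}\right].
\end{equation*}
Passing to a total cylinder width $T=\sum_j t_j$ and simplex coordinates $x_j=t_j/T$ produces a Jacobian $T^n\,dT$, while reparameterization invariance --- generalizing \eq{reg_phi2} --- yields $\Tr[\Omega^{Tx_0}\psi_{i_1}\cdots\psi_{i_n}\Omega^{Tx_n}]=T^{-H}\,G(x_0,\ldots,x_n)$, with $H=\sum_j h_{i_j}$ the total scaling dimension of the insertions. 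If the leading behaviour of $f_j(t)$ at $t=0$ is $t^{\mu_j}$, the integrand near $T=0$ scales as $T^{\sum_j\mu_j+n-H}\,dT$, so convergence at the identity corner $T=0$ requires $\sum_j(\mu_j+1)>H$, equivalently $\sum_j\nu_j^{\rm lead}>H$. This is exactly the statement that the highest level of the monomial in its dual $\mathcal{L}^-$ expansion, namely $H-\sum_j\nu_j^{\rm lead}$, is strictly negative.

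To upgrade from a single monomial to a general $\Phi$, I would invoke the uniqueness of the dual $\mathcal{L}^-$ expansion: regrouping monomials by their $\mathcal{L}^-$ eigenvalue produces the unique decomposition $\Phi=\sum_k\Phi_{h_k}$, and each $\Phi_{h_k}$ inherits a Schwinger integrand scaling $T^{-h_k-1}\,dT$ from its eigenstate property, so the small-$T$ behaviour of $\Tr[\Phi]$ is dominated by the largest $h_k$. The main obstacle is the possibility that, after integration over the simplex, the ratio correlator $G(x_0,\ldots,x_n)$ vanishes at the leading order and thereby masks the true divergence; were this to happen, the identity-regime condition would spuriously weaken. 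I would rule this out by observing that $G$ is a generic cylinder correlator of primaries at non-coincident points and is non-vanishing on the open simplex, so the integration over ratios produces a non-zero overall coefficient. The remainder pieces $R_j(K)$, which vanish faster than any power as $t\to 0$, contribute only regular terms to the $T\to 0$ expansion and do not affect the bound.
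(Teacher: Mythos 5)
Your proposal is correct and follows essentially the same route as the paper's proof: write $\Phi$ as a linear combination of monomials in Schwinger form, pass to the total width and angular (simplex) variables, use reparameterization invariance to isolate the power of the total width, and demand convergence at the identity corner, which translates directly into the statement that the leading dual $\mathcal{L}^-$ level is strictly negative. The only cosmetic difference is that you factor each $F_j(K)$ separately and track the leading powers $\mu_j$ individually, whereas the paper phrases the same condition as a scaling limit on the joint function $f(t_1,\ldots,t_n)$ and its Laplace transform $F(K_1,\ldots,K_n)$.
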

\begin{proof}A general field in the subalgebra of wedge states with insertions
can be expressed as a linear combination of states of the form
\begin{equation}\Phi = \int_0^\infty dt_1dt_2...dt_n\, f(t_1,t_2,...,t_n)\, 
\Omega^{t_1}\phi_1\Omega^{t_2}\phi_2...\Omega^{t_n}\phi_n.\end{equation}
For notational convenience we place the insertion $\phi_n$ at the right edge 
of $\Phi$. If $\Phi$ has no nontrivial insertion there, we can set $\phi_n=1$.
Taking the Laplace transform of $f(t_1,...,t_n)$ defines the function
\begin{equation}F(K_1,...,K_n) = \int_0^\infty dt_1...dt_n\, f(t_1,...,t_n)\,
e^{-t_1K_1}...e^{-t_nK_n}.\end{equation}
We can think of $F(K_1,...,K_n)$ either as a function of $n$ numbers $K_i$,
or as a function of a single string field $K$ with the understanding the index 
on $K_i$ tells us how to order $K$ relative to the insertions $\phi_i$. Taking
the trace, we can reorganize the integrals over $t$s into an integral 
over the total width of the cylinder $L = t_1+...+t_n$ and an integral over
angular parameters $\theta_i = t_i/L$ separating the insertions:
\begin{eqnarray}\Tr[\Phi]\lineup = \int_0^\infty dt_1...dt_n\, 
f(t_1,...,t_n)\,
\Tr[\Omega^{t_1}\phi_1...\Omega^{t_n}\phi_n]\nonumber\\
\lineup = \int_0^\infty dL\, L^{n-1}\int 
d\theta_1...d\theta_n \,\delta(\theta_1+...+\theta_n-1)
f(L\theta_1,...,L\theta_n)\,\Tr[\Omega^{L\theta_1}\phi_1...
\Omega^{L\theta_n}\phi_n].\nonumber\\ \end{eqnarray}
The identity string field appears in the $L\to 0$ limit of the integration. 
With a reparameterization we can pull the trace out of the 
integration over $L$:
\begin{equation}\Tr[\Phi]= \int 
d\theta_1...d\theta_n \,\delta(\theta_1+...+\theta_n-1)\,
\Tr[\Omega^{\theta_1}\phi_1...\Omega^{\theta_n}\phi_n]
\int_0^\infty dL\, \left(\frac{1}{L}\right)^{h_1+...+h_n+1-n}
\,f(L\theta_1,...,L\theta_n).\end{equation}
We assume that this quantity is well defined only if the integration over $L$
is convergent towards $L=0$. This requires
\begin{equation}\lim_{L\to 0} \left(\frac{1}{L}\right)^{h_1+...+h_n-n}
f(L\theta_1,...,L\theta_n) = 0.\end{equation}
This can be equivalently stated
\begin{equation}\lim_{\alpha\to\infty}\alpha^{h_1+...+h_n-n}
f\left(\frac{t_1}{\alpha},...,\frac{t_n}{\alpha}\right)=0.\end{equation}
Now consider the limit: 
\begin{eqnarray}\lim_{\alpha\to\infty} \alpha^{h_1+...+h_n}
F(\alpha K_1,...\alpha K_n)\lineup = \lim_{\alpha\to\infty} 
\alpha^{h_1+...+h_n}\int_0^\infty dt_1...dt_n\, f(t_1,...,t_n)\,
e^{-\alpha t_1 K_1}...e^{-\alpha t_nK_n}\nonumber\\
\lineup = \int_0^\infty dt_1...dt_n\,\left[\lim_{\alpha\to\infty} \alpha^{h_1+...+h_n-n}f\left(\frac{t_1}{\alpha},...,\frac{t_n}{\alpha}\right)\right]
e^{- t_1 K_1}...e^{- t_nK_n}\nonumber\\
\lineup = 0.
\end{eqnarray} This means that $F(K_1,...,K_n)$ vanishes at $K_i\to\infty$ 
faster the inverse power of $h_1+...+h_n$. Therefore, if $\Phi$ admits
a dual $\mathcal{L}^-$ level expansion, the leading level must be strictly 
negative.
\end{proof}
\noindent One corollary of this result is that solutions 
cannot have zero or positive levels in the dual $\mathcal{L}^-$ level 
expansion, since the action and closed string overlap \cite{Ellwood} 
must be well-defined. Usually it is easy to check this by 
reading off the leading behavior of the solution around $K=\infty$. 
For example, let's look at the ``simple'' tachyon vacuum \cite{simple} and the 
KOS marginal solution \cite{bcc}:
\begin{eqnarray}
\Psi_\mathrm{simple} \lineup = (c+Q(Bc))\frac{1}{1+K}, \nonumber\\
\Psi_\mathrm{KOS} \lineup = \frac{1}{\sqrt{1+K}}Q\sigma_{01}\frac{B}{1+K}\sigma_{10}(1+K)c
\frac{1}{\sqrt{1+K}}.
\end{eqnarray}
Expanding around $K=\infty$ gives the leading behavior:
\begin{eqnarray}
\Psi_\mathrm{simple} \lineup = Q(Bc)\frac{1}{K} + ...\ \ ,
 \label{eq:lead_simp}\\
\Psi_\mathrm{KOS} \lineup = \frac{1}{\sqrt{K}}\left[Q\sigma_{01}\frac{B}{K}
Q\sigma_{10}
+\sigma_{01}Q\sigma_{10}\right]\frac{1}{\sqrt{K}}+...\ \ .
\end{eqnarray}
Recalling that $c$ has scaling dimension $-1$ and the boundary condition 
changing operators $\sigma_{01}$ and $\sigma_{10}$ have dimension $0$, we can 
see that the leading level for both solutions is $-1$. This is the 
highest possible (integer) level consistent with a regularity, so in a 
sense these solutions are as identity-like as possible. 
There are also well-known identity-based 
solutions for the tachyon vacuum \cite{id1,id2} and marginal deformations 
\cite{rolling}
\begin{equation}\Psi_\mathrm{tv} = c(1-K),\ \ \ \ \ \ \Psi_\mathrm{marg} 
= cV.\end{equation}
Since the marginal operator $V$ has dimension $1$, the leading level for both
of these solutions is level $0$, which means that they are too identity-like.

Let us mention one other result, which is important for the physical 
interpretation of the dual $\mathcal{L}^-$ level expansion:
\begin{theorem} \label{thm:gauge}
Let $\Psi$ be a regular solution which admits a dual 
$\mathcal{L}^-$ level expansion. Then all finite levels in the 
dual $\mathcal{L}^-$ level expansion of $\Psi$ can be eliminated by a gauge 
transformation.
\end{theorem}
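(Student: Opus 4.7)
The plan is to realize the rescaling generated by $\mathcal{L}^-$ as a gauge transformation on the space of solutions and then take the scale to infinity in order to annihilate every finite level at once. The central ingredient is that $\mathcal{L}^-$ is BRST exact in the sliver frame, $\mathcal{L}^- = \{Q,\mathcal{B}^-\}$, where $\mathcal{B}^-$ is the BPZ-odd component of the sliver-frame $b_0$. Because $\mathcal{B}^-$ is midpoint preserving, it is a graded derivation of the star product of ghost number $-1$, so the gauge parameter $\half\mathcal{B}^-\Psi$ has the correct quantum numbers to generate a gauge transformation.

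Using the equation of motion $Q\Psi = -\Psi^2$ together with the derivation property, a short calculation gives
\begin{equation*}
\half\mathcal{L}^-\Psi \;=\; Q\bigl(\half\mathcal{B}^-\Psi\bigr) + \half\mathcal{B}^-(Q\Psi) \;=\; Q\bigl(\half\mathcal{B}^-\Psi\bigr) + \bigl[\Psi,\,\half\mathcal{B}^-\Psi\bigr]_\star,
\end{equation*}
which is exactly the infinitesimal gauge variation of $\Psi$ with parameter $\Lambda = \half\mathcal{B}^-\Psi$. Integrating this flow produces a one-parameter family $\Psi_\lambda$ of gauge-equivalent solutions; because $\mathcal{L}^-$ is diagonal on the dual level expansion, the flow simply multiplies a level-$h$ eigenstate by the factor $e^{\lambda h}$.

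Now write the dual $\mathcal{L}^-$ level expansion as $\Psi = \sum_n \Psi_{h_n} + R(K)$. The corollary to Theorem \ref{thm:Lm} (well-definedness of the action and of Ellwood invariants of $\Psi$) forces every finite level $h_n$ to be strictly negative. Under the scaling flow each finite-level piece becomes $e^{\lambda h_n}\Psi_{h_n}$ and is exponentially suppressed as $\lambda\to+\infty$, whereas the remainder $R(K)$---which by definition vanishes faster than any inverse power of $K$ at infinity, i.e.\ sits at level $-\infty$---is the only piece that can survive. Hence the gauge-equivalent solution $\Psi_\infty = \lim_{\lambda\to\infty}\Psi_\lambda$ has a dual $\mathcal{L}^-$ level expansion consisting only of its level-$(-\infty)$ remainder, which is the desired conclusion.

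The genuine technical obstacle is making the $\lambda\to\infty$ limit rigorous: the same scaling that kills the finite levels also pushes the surviving remainder toward the sliver end of the wedge algebra (for instance $\Omega \mapsto \Omega^{e^\lambda}$), so one has to verify both that $\Psi_\infty$ exists as a bona fide element of the subalgebra of wedge states with insertions and that gauge-equivalence survives the limit. For a regular solution I would handle this by controlling the gauge parameter $\half\mathcal{B}^-\Psi_\lambda$ uniformly along the flow in an appropriate topology, but making this fully precise is the real content of the theorem and will presumably have to lean on whatever notion of regularity already guarantees well-defined observables.
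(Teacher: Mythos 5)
Your infinitesimal identity is correct: on shell, $\half\mathcal{L}^-\Psi = Q\Lambda + [\Psi,\Lambda]$ with $\Lambda = \half\mathcal{B}^-\Psi$, so the scaling flow is a gauge flow at every \emph{finite} $\lambda$. The gap is in the final step. At finite $\lambda$ no level is actually eliminated---each eigenstate $\Psi_{h_n}$ is merely multiplied by $e^{\lambda h_n}$---so the conclusion requires the strict $\lambda\to\infty$ limit, and that limit is precisely a \emph{singular} gauge transformation in the sense of \cite{Integra,phantom}: the would-be group element degenerates into a projector-like object, and such limits are exactly the ones known to connect physically inequivalent solutions (for instance the perturbative vacuum to the tachyon vacuum). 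Moreover, the surviving piece is not ``$R(K)$'' but its image under the flow, e.g.\ $\Omega\mapsto\Omega^{e^\lambda}$, which tends to the sliver state; so $\Psi_\infty$ is generically not an element of the subalgebra at all. No choice of topology or uniform bound on the gauge parameter repairs this, because the obstruction is the degeneration of the limiting configuration, not the size of $\half\mathcal{B}^-\Psi_\lambda$. The ``technical obstacle'' you defer at the end is therefore the entire content of the claim, and as stated your argument does not close it.

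The paper's proof takes a different and more robust route that never invokes an infinite rescaling: it removes the levels one at a time. Writing $\Psi = \Psi_h + (\mathrm{lower\ levels})$, Theorem \ref{thm:Lm} gives $h<0$, so the quadratic term in the equations of motion contributes only below level $h$ and hence $Q\Psi_h = 0$. Since $A = B/K$ satisfies $QA = 1$, the top level is exact, $\Psi_h = Q\big(\frac{B}{K}\Psi_h\big)$, and the gauge transformation $U = 1 - \frac{B}{K}\Psi_h + (\mathrm{lower\ levels})$ cancels it exactly at level $h$; iterating removes every finite level. Each step is an honest (if formal) gauge transformation inside the algebra, with the freedom in the lower-level tail of $U$ used to cancel the $1/K$ sliver divergences and keep $U$ invertible. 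If you want to salvage the scaling argument, you would at minimum need to show that $U_\lambda$ stays invertible uniformly in $\lambda$ and that $\lim_{\lambda\to\infty}U_\lambda$ does not degenerate into a projector---which is exactly what fails in the known examples of singular gauge transformations.
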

\begin{proof}The dual $\mathcal{L}^-$ level expansion of $\Psi$ takes 
the form
\begin{equation}\Psi = \Psi_h + \mathrm{lower\ levels},\end{equation}
where $\half\mathcal{L}^-\Psi_h = h\Psi_h$ and ``lower levels'' denotes a 
sequence of $\mathcal{L}^-$ eigenstates with eigenvalue less than $h$. Since
by assumption $\Psi$ is a well-defined solution, theorem \ref{thm:Lm} implies
that $h$ is less than zero. Now consider the level $h$ contribution to 
the equations of motion $Q\Psi+\Psi^2=0$. Since $h$ is negative, the 
highest level of $\Psi^2$ is necessarily less than $h$, so the quadratic
term does not contribute at this level. Therefore the equations of motion imply
\begin{equation}Q\Psi_h = 0.\end{equation}
Now we want to explicitly remove the term $\Psi_h$ by a gauge transformation.
Assume the gauge transformation takes the form
\begin{equation}U=1+U_h +\mathrm{lower\ levels}.\end{equation}
Thus we require that $U^{-1}(Q+\Psi)U$ vanishes at level $h$, 
which implies
\begin{equation}QU_h +\Psi_h=0.\end{equation}
We can solve this equation only if $\Psi_h$ is BRST exact. However, at finite
levels all BRST closed states are exact, since we can define the 
homotopy operator
\begin{equation}A=\frac{B}{K}.\end{equation}
Of course this is a singular state, but according to the current 
philosophy this is not a problem. The sliver divergence can be 
arranged to cancel against lower level states. Therefore we can remove 
the leading level term $\Psi_h$ with a gauge transformation of the form
\begin{equation}U = 1-\frac{B}{K}\Psi_h+\mathrm{lower\ levels}.\label{eq:rem}
\end{equation}
We assume that the lower level states can be chosen so that $U$ is well-defined
and invertible. This means that the highest level of $\Psi$ can be eliminated 
by a gauge transformation, and by repeating this process iteratively, 
all finite levels can be eliminated.
\end{proof}
\noindent This result means that there is no gauge invariant information 
revealed by the dual $\mathcal{L}^-$ level expansion. In a sense, 
there is no ``physics'' in the identity string field. All of the physics of 
is carried by states at level $-\infty$. Note that the prototypical 
example of a level $-\infty$ state is the sliver; Therefore, theorem 
\ref{thm:gauge} fits well with 
studies of the phantom term \cite{Schnabl,phantom} and vacuum string field 
theory \cite{VSFT,RSZsol,RSZproj}, which suggest that all of the physics of 
open string field theory can be encoded in finite rank projectors.

Let illustrate the method behind the proof of theorem \ref{thm:gauge} with 
an example. Suppose that for some application the ``simple''
tachyon vacuum solution is too identity-like, and we want to remove 
the leading term in the dual $\mathcal{L}^-$ level expansion. 
The leading term is written in \eq{lead_simp}. Plugging
into \eq{rem} we find that the required gauge transformation takes the form
\begin{equation}U = 
1-\frac{B}{K}Q(Bc)\frac{1}{K}+...=1-Bc\frac{1}{K}+...\ \ .
\end{equation}
To complete the definition of $U$ we should fix the lower level states 
according to our convenience. The first priority is to get rid of the pole at 
$K=0$. This can be achieved, for example, by replacing $\frac{1}{K}$ with 
$\frac{1}{K+1}$, but unfortunately the 
resulting gauge transformation is not invertible. 
So instead we can replace $\frac{1}{K}$ with $\frac{1}{K+2}$:
\begin{equation}U = 1-Bc\frac{1}{K+2}.\end{equation}
Transforming the ``simple'' tachyon vacuum with this gauge parameter happens 
to give the solution
\begin{eqnarray}\Psi' \lineup =
U^{-1}(Q+\Psi_\mathrm{simple})U \nonumber\\
\lineup = \frac{K+2}{K+1}\left[c\frac{B}{K+2}c+Q(Bc)\right]
\frac{1}{(K+1)(K+2)}.
\end{eqnarray}
The dual $\mathcal{L}^-$ level expansion takes the form
\begin{equation}\Psi' = Q(Bc)\frac{1}{K^2}+...\ \ .
\end{equation}
Now the leading level is $-2$. The level $-1$ state of the ``simple'' tachyon 
vacuum has been removed as desired.

\section{Applications to Solutions in the $KBc$ subalgebra}

In this section we turn our attention to solutions in the $KBc$ subalgebra. 
The string fields $K$ and $B$ were discussed before. The field $c$ corresponds
to a local insertion of the $c$ ghost in correlation functions on the 
cylinder. It satisfies the identities
\begin{equation}c^2=0,\ \ \ \ \ \ \ \ cB+Bc=1,\ \ \ \ \ \ \ \ 
Qc=cKc,\label{eq:c_id}
\end{equation}
and has scaling dimension $-1$:
\begin{equation}\half\mathcal{L}^- c = -c.\end{equation}
Equations \eq{c_id} and \eq{KB_id} define what we call the 
{\it $KBc$ subalgebra}. Note that $c$ satisfies additional relations 
which are not implied by the $KBc$ subalgebra, for example 
$(\d c)^2=0$ (using the notation $\d \equiv [K,\,\cdot\,]$). We call 
these {\it auxiliary identities}.\footnote{The 
general set of auxiliary identities is $(\d^m c)^2=0$ for all $m\geq 1$. 
This implies $\d^m c \,\d^n c = - \d^n c\,\d^m c$. 
Equation \eq{c_id} only implies the cases $m=n=0$ and $m=0,n=1$.} A general
realization of the $KBc$ subalgebra does not satisfy these identities. 
See, for example, equation \eq{aut} later.
To simplify the discussion we will focus on solutions which satisfy the 
equations of motion by virtue of the basic 
relations of the $KBc$ subalgebra, \eq{KB_id} and \eq{c_id}, alone. 
In particular, theorems \ref{thm:comp} and \ref{thm:nogo} apply to solutions 
in this class. Auxiliary identities produce further solutions which we have 
not systematically analyzed, but play a role in later discussion. 

Our task is to identify all possible gauge orbits of solutions in the 
$KBc$ subalgebra using the $\mathcal{L}^-$ level expansion, and then 
to investigate the regularity of these solutions using the dual 
$\mathcal{L}^-$ level expansion. Our results can be summarized as follows: 
We identify $6$ gauge equivalence classes of solutions:
\begin{enumerate}
\item[1)] Perturbative vacuum

\item[2)] Tachyon vacuum 

\item[3)] Residual perturbative vacuum

\item[4)] Residual tachyon vacuum

\item[5)] Residual conjugate tachyon vacuum

\item[6)] MNT ghost brane
\end{enumerate}
The last four types of solution are unexpected. We will 
call them {\it residual solutions}.\footnote{The residual 
tachyon vacuum solutions are 
discussed by Zeze \cite{id2}. The residual perturbative 
vacuum solutions were first pointed out to the author by C. Maccaferri, and 
they were also noted by MNT \cite{MNT}.} As we will see, they are 
singular from the perspective of the identity string field. 
In particular, residual solutions always have zero or positive levels in 
the dual $\mathcal{L}^-$ level expansion. 

Consider the $\mathcal{L}^-$ level expansion of a solution $\Psi$ and 
gauge parameter $U$ in the $KBc$ subalgebra:
\begin{eqnarray}\Psi = \lineup \Psi_{-1}+\Psi_0+\Psi_1+\Psi_2+...,\nonumber\\
\lineup \Psi_{-1} \equiv \alpha\, c,\nonumber\\
\lineup \Psi_0 \equiv  \gamma_1\, cK +\gamma_2\, Kc + \beta\, cKBc,\nonumber\\
\lineup \ \ \ \vdots\nonumber\\
U= \lineup U_0+U_1+U_2+...,\nonumber\\
\lineup U_0 \equiv 1+\lambda Bc\ \ \ \ (\lambda\neq -1), \nonumber\\
\lineup \ \ \ \vdots 
\label{eq:PsinUn}
\end{eqnarray}
The index on the eigenstates $\Psi_n$ and $U_n$ refers to their 
$\half\mathcal{L}^-$ eigenvalue, and the constants 
$\alpha,\beta,\gamma_1,\gamma_2$ are coefficients to be 
determined by solving the equations of motion. The constant $\lambda$
in $U$ can take any value besides $-1$, which is not allowed
since $U$ must be invertible. The equations of motion imply
\begin{equation}\Psi_{-1}^2 =0,\end{equation}
which is satisfied for any choice of the coefficient $\alpha$ in front of 
$c$. Transforming $\Psi$ with $U$, it is easy to show that all nonzero 
choices of $\alpha$ can be related by a gauge transformation with the 
appropriate choice of $\lambda$. However, $\alpha =0$ and $\alpha\neq 0$ 
cannot be related by a gauge transformation. Solutions with $\alpha\neq 0$ 
turn out to describe the tachyon vacuum: 
\begin{equation}\mathrm{Tachyon\ vacuum}:\ \ \ 
\Psi = \alpha\, c +\,...\ \ \ \ (\alpha\neq 0),
\label{eq:m1gauge}\end{equation}
where $...$ denotes higher level terms. 
Now let's look at the $\alpha= 0$ solutions. The $\mathcal{L}^-$ 
level expansion now takes the form
\begin{equation}\Psi = \Psi_0+\Psi_1+\Psi_2+...\ \ .\end{equation}
The equations of motion imply that the leading contribution $\Psi_0$ 
itself satisfies the equations of motion: 
\begin{equation}Q\Psi_0+\Psi_0^2 = 0.\end{equation}
Plugging in $\Psi_0$ from \eq{PsinUn}, straightforward algebra
reveals five possible solutions:
\begin{eqnarray}
\lineup \mathrm{Perturbative\ vacuum}:\ \ \ \ \ \ \ \ \ \ \ 
\ \ \ \ \ \ \ \ \ \ \ \ \ 
\Psi=\beta\,cKBc +\,...,
\ \ \ \ \ (\beta\neq -1) \nonumber\\
\lineup \mathrm{Residual\ perturbative\ vacuum}:\ \ \ \ \ \ \ \ \ \ \ \ 
\Psi = -cKBc
+\,...,\nonumber\\
\lineup \mathrm{Residual\ tachyon\ vacuum}:\ \ \ \ \ \ \ \ \ \ \ \ \ \ \ \ \ 
\  
\Psi = -cK
+\,...,\nonumber\\
\lineup \mathrm{Residual\ conjugate\ tachyon\ vacuum}:\ \ \ \ \,
\Psi = -Kc
+\,...\nonumber\\
\lineup \mathrm{MNT\ ghost\ brane}:\ \ \ \ \ \ \ \ \ \ \ \ \ \ 
\ \ \ \ \ \ \ \ \ \ \ \ \ \ 
\Psi = -cK-Kc+cKBc+
\,...\ \ .
\label{eq:0gauge}\end{eqnarray}
The last four solutions---the residual solutions---are invariant under gauge 
transformations by $U$ at this level. A gauge transformation of the 
perturbative vacuum can set the coefficient $\beta$ to any value 
(with the appropriate choice of $\lambda$), 
{\it except} $\beta=-1$. Note that the two residual tachyon vacuum solutions
are not gauge equivalent, though they are related by conjugation.

Thus we have extracted a total of six gauge orbits. Now we can ask whether
analysis of higher level states will reveal further physically distinct 
solutions. The answer is no, according to the following theorem:
\begin{theorem}\label{thm:comp}
Equations \eq{m1gauge} and \eq{0gauge} are the only gauge orbits for solutions 
in the $\mathcal{L}^-$ level expansion of the $KBc$ subalgebra.
\end{theorem}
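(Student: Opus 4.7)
The plan is to prove the theorem by induction on the $\mathcal{L}^-$ level. The leading-order analysis has already exhibited, up to gauge transformations by $U_0 = 1 + \lambda Bc$, exactly six possible leading eigenstates of $\Psi$ in \eq{m1gauge} and \eq{0gauge}. The content of the theorem is therefore a completeness statement: within each of these six candidate orbits, any two solutions agreeing at the leading level differ by a gauge transformation built out of higher-level $U_n$, so no new continuous or discrete moduli emerge at positive levels.

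First I would observe that the $\mathcal{L}^-$ eigenspace at level $n$ within the $KBc$ subalgebra is finite-dimensional, spanned by words in $K, B, c$ of total weight $n$ (with $K, B$ counting $+1$ and $c$ counting $-1$), reduced modulo the defining relations $c^2 = 0$, $B^2 = 0$, $cB+Bc = 1$, $[K,B]=0$. The problem at each level therefore reduces to finite-dimensional linear algebra. Given two solutions $\Psi$ and $\Psi'$ with the same leading-order data, the EOM constrain their difference $\Delta$ level by level, while a level-$m$ gauge parameter $U_m$ deforms $\Psi$ through the nilpotent deformed BRST operator $\mathcal{Q}_{\Psi_\mathrm{lead}} := Q + [\Psi_\mathrm{lead},\,\cdot\,]$, up to contributions from subleading pieces of $\Psi$ and $U$. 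Eliminating the lowest unmatched level of $\Delta$ by a gauge transformation thus amounts to exhibiting it as $\mathcal{Q}_{\Psi_\mathrm{lead}}$-exact within $KBc$, possibly after iteratively adjusting lower-level $U_m$'s.

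The core of the proof is therefore a cohomological claim: for each of the six choices of $\Psi_\mathrm{lead}$, the cohomology of $\mathcal{Q}_{\Psi_\mathrm{lead}}$ on the level-$n$ eigenspace of $KBc$ is trivial at every $n$ above the leading level. Given this, induction on $n$ builds up the matching gauge transformation level by level, and the proof is complete. Existence of at least one representative in each orbit is guaranteed by the explicit solutions already mentioned in and around \eq{0gauge}, so one only needs the uniqueness half.

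The hard part will be establishing the cohomological triviality. The natural homotopy $A = B/K$ invoked in Theorem~\ref{thm:gauge} is not itself an element of the $KBc$ subalgebra, so one cannot apply it directly. Instead, one must construct a homotopy internal to $KBc$ for each $\Psi_\mathrm{lead}$, or at least a level-by-level procedure that produces the required primitive. A natural strategy is to stratify the finite-dimensional basis at level $n$ by the number of $c$'s and $B$'s and by the arrangement of $K$ insertions, and to verify triviality directly using the rigidity of the $KBc$ relations; the fact that the six candidate $\Psi_\mathrm{lead}$'s are built from the same short list of elementary level $-1$ and $0$ monomials should make such a basis-level check tractable for each case.
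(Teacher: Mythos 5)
Your inductive skeleton coincides with the paper's: match two solutions level by level, and reduce the step from level $n$ to level $n+1$ to showing that the closed difference $\Psi'_{n+1}-\Psi_{n+1}$ is exact for the relevant deformed kinetic operator (for the tachyon vacuum orbit the operator that matters at the lowest unmatched level is just $[\Psi_{-1},\,\cdot\,]=\alpha[c,\,\cdot\,]$, with homotopy $B/\alpha$, which \emph{is} in the subalgebra, so your worry about $B/K$ is moot there). The gap is that you leave the decisive cohomological lemma as an assertion to be checked ``directly using the rigidity of the $KBc$ relations,'' and the lemma as you state it --- triviality of the ghost number $1$ cohomology of $\mathcal{Q}_{\Psi_{\mathrm{lead}}}$ at every level above the leading one --- is not literally true. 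The paper's mechanism is to introduce $\mathcal{B}^-=\mathcal{B}_0-\mathcal{B}_0^\star$ and the operator $\mathcal{L}^-_{\Psi_0}=[Q_{\Psi_0},\mathcal{B}^-]=\mathcal{L}^-+[(\mathcal{B}^-\Psi_0),\,\cdot\,]$; any closed eigenstate with nonzero $\mathcal{L}^-_{\Psi_0}$ eigenvalue is automatically exact via $\phi=\frac{1}{h}Q_{\Psi_0}(\mathcal{B}^-\phi)$, so the whole infinite-dimensional problem collapses to inspecting the finitely many kernel states, which are found by diagonalizing $[cB,\,\cdot\,]$ simultaneously with $\mathcal{L}^-$. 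Without some such contracting homotopy, a raw basis-by-basis check at every level $n$ is not a proof.

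The case analysis of the kernel is also where the real subtlety lives, and your blanket triviality claim would founder on it. Around the residual tachyon vacua the kernel at level $1$ contains $B(\d c)^2$, and around the MNT ghost brane the kernel at level $2$ contains $K(\d c)^2B$, $B(\d c)^2K$ and $\d c\,KB\,\d c$. The first several are exact, but $\d c\,KB\,\d c$ is a \emph{genuine} nontrivial cohomology element if one imposes the auxiliary identity $(\d c)^2=0$ satisfied by the concrete realization of $c$; the induction only closes because the theorem is restricted to solutions of the abstract algebra \eq{KB_id}, \eq{c_id}, in which $\d c\,KB\,\d c$ fails to be $Q_{\Psi_0}$-closed and hence poses no obstruction. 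Your proposal, by ``reducing modulo the defining relations'' while also appealing to the rigidity of the concrete $KBc$ realization, does not fix which algebra the cohomology is computed in, and the answer to the completeness question genuinely depends on that choice (with auxiliary identities there are further deformations, cf. the moduli \eq{aux}). You need to state and exploit this restriction explicitly for the induction to terminate.
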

\noindent The proof of this statement is somewhat lengthy so we 
postpone it to appendix \ref{app:thm}. Note that this theorem only applies 
to solutions as they are defined in the $\mathcal{L}^-$ 
level expansion. Solutions which are not analytic at $K=0$ 
(for example, multibranes and fractional branes \cite{multiple}) do not fall 
within this classification. Moreover, the $\mathcal{L}^-$ level expansion 
by itself is not a complete definition of the string field, which should 
satisfy many regularity conditions which cannot be seen before 
resummation. Such considerations may exclude certain gauge orbits 
(for example, as we will argue, the residual solutions) or it can reveal 
further gauge orbits within the ones already described. The only example 
of the later phenomenon known to us appears in the so-called half-brane 
solutions of cubic superstring field theory \cite{exotic}. We are not aware
of such phenomena in the $KBc$ subalgebra.

It is quite surprising to find four distinct solutions 
in addition to perturbative vacuum and the tachyon vacuum. What do these 
solutions mean?  One basic quantity we can calculate is the energy. 
Actually, there is a rule of thumb which says that the energy 
(relative to the perturbative vacuum) is the sum of the
coefficients of $cK$ and $Kc$ in the $\mathcal{L}^-$ level expansion:
\begin{equation}\mathrm{Energy}\sim \gamma_1+\gamma_2.\label{eq:g1g2}
\end{equation}
Thus the residual perturbative vacuum has zero energy, the residual 
tachyon vacuum has minus the energy of the reference D-brane (i.e.
the same energy as the tachyon vacuum), and the MNT ghost brane
has minus twice the energy of the reference D-brane. To see where the 
rule \eq{g1g2} comes from, first note that $\gamma_1+\gamma_2$ is a 
gauge invariant quantity for all ghost number $1$ states in the $KBc$ 
subalgebra. Therefore, if $\gamma_1+\gamma_2$ computes the energy for one 
solution in each gauge orbit, it computes the energy for 
all solutions. So let us take a representative solution from each gauge
orbit which terminates at level $0$ in the $\mathcal{L}^-$ level 
expansion:
\begin{equation}\Psi = \alpha\, c+\gamma_1\, cK +\gamma_2\,Kc+\beta\, cKBc,
\label{eq:0trunc}\end{equation}
where the coefficients are fixed according to which gauge orbit we 
are describing. We cannot compute the energy of \eq{0trunc} directly since 
the solution is too identity-like. To fix this problem, 
we regularize the $KBc$ subalgebra by defining the fields
\cite{simple_talk}
 \begin{eqnarray}\hat{K} \lineup \equiv \frac{K}{1+\eps K},\nonumber\\
\hat{B}\lineup \equiv \frac{B}{1+\eps K},\nonumber\\
\hat{c}\lineup \equiv c(1+\eps K)B c.\label{eq:aut}\end{eqnarray}
It is easy to check that $\hat{K},\hat{B}$ and $\hat{c}$ satisfy the 
defining relations \eq{c_id} and \eq{KB_id} of the $KBc$ 
subalgebra.\footnote{We can generalize \eq{aut} given any $F(K)$ satisfying
$F(0)=1$ by
\begin{equation}\hat{K} = 1-F,\ \ \ \ \ \hat{B} = B\frac{1-F}{K},\ \ \ \ \ 
\hat{c} = c\frac{KB}{1-F}c\end{equation}} Moreover, the redefined fields
$\hat{K},\hat{B}$ are more regular than $K,B$ in the dual $\mathcal{L}^-$ 
level expansion. With this replacement the solution becomes
\begin{eqnarray}\hat{\Psi} \lineup 
= \alpha\, \hat{c} + \gamma_1\,\hat{c}\hat{K}
+\gamma_2\,\hat{K}\hat{c}+\beta\,\hat{c}\hat{K}\hat{B}\hat{c}\nonumber\\
\lineup = -\gamma_1\left(\frac{1}{\eps}c +Q(Bc)\right)\frac{1}{1+\eps K}
-\gamma_2\, \frac{1}{1+\eps K}\left(\frac{1}{\eps}c +Q(Bc)\right)
\nonumber\\ \lineup \ \ \ \ \ \ \ \ \ \ \ \ \ \ \ \ \ \ \ \ +\left(\alpha+
\frac{\gamma_1+\gamma_2}{\eps}\right)c+(\alpha+\gamma_1+\gamma_2+\beta)cKBc.
\label{eq:hat0trunc}\end{eqnarray}
Now we are in a better position to compute the energy. We could compute
the action, but it is a little easier to compute the closed string overlap 
$\Tr_\mathcal{V}[\hat{\Psi}]$, which for our purposes is equivalent.
The first two terms in \eq{hat0trunc} are proportional to 
(reparameterizations of) of the ``simple'' tachyon vacuum solution 
\cite{simple}, and therefore contribute proportionally 
to the closed string overlap of the tachyon vacuum. The 
third term does not contribute to the overlap because $c$ has negative 
dimension. The fourth term is problematic since it is a level $0$ state in 
the dual $\mathcal{L}^-$ level expansion, and shouldn't have a well-defined
trace. However, since $cKBc$ is BRST exact it is natural to 
assume 
\begin{equation}\Tr_\mathcal{V}[cKBc]\equiv 0.\label{eq:ass}\end{equation} 
Under this assumption the closed string overlap is 
\begin{equation}\Tr_\mathcal{V}[\hat{\Psi}] = -(\gamma_1+\gamma_2)\,\times\,
\mathrm{tachyon\ vacuum\ overlap},\end{equation}
and therefore the energy is proportional to $\gamma_1+\gamma_2$, as claimed.
This argument is more rigorous for pure gauge and tachyon vacuum solutions,
since there it is possible to find representatives of the gauge orbit where
$\alpha+\gamma_1+\gamma_2+\beta$ vanishes, and we don't need
to assume anything about the trace of $cKBc$. For residual solutions in
the form \eq{hat0trunc} the state $cKBc$ is present. In a moment 
we will show that all residual solutions have such an identity-like term. 

Let us mention a curiosity which raises doubt as to whether 
residual solutions have well-defined energy. When we computed the 
solutions in \eq{0gauge}, we
assumed that $K,B$ and $c$ satisfy only the basic identities \eq{KB_id}
and \eq{c_id}. However, if we also account for the auxiliary identity
$(\d c)^2=0$, the solution space at level zero is enlarged. The 
four residual solutions become special cases of a two 
parameter family of solutions\footnote{The first example 
of a solution utilizing an auxiliary identity was pointed out to the 
author by M. Schnabl.}
\begin{equation}\Psi_0 = \gamma_1\, cK +\gamma_2\,Kc - (1+\gamma_1+\gamma_2)
cKBc ,\label{eq:aux}\end{equation}
for arbitrary $\gamma_1$ and $\gamma_2$. This means that all four residual 
solutions are related by marginal deformations, where $\gamma_1$ and 
$\gamma_2$ are marginal parameters  (see figure \ref{fig:IdSing2}). 
Therefore residual solutions must have the same energy, 
in contradiction with our earlier reasoning. It would be interesting to see 
how this observation can be reconciled with MNT's calculation of the boundary 
state \cite{MNT,boundary}.

\begin{figure}
\begin{center}
\resizebox{2.7in}{2in}{\includegraphics{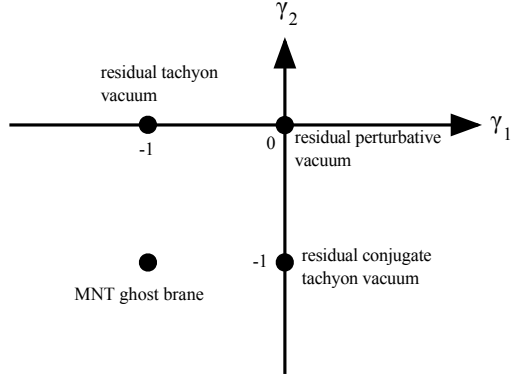}}
\end{center}
\caption{\label{fig:IdSing2} Moduli space of residual solutions. The 
horizontal and vertical axis represent expectation values of the marginal
fields $\gamma_1$ and $\gamma_2$ from \eq{aux}. The four 
points are the four residual solutions \eq{0gauge} whose existence depends only
on the basic algebraic relations of the $KBc$ subalgebra. The remaining 
solutions require the additional relation $(\d c)^2=0$ to satisfy the 
equations of motion.}\end{figure}

These problems appear to be related to the fact that residual solutions 
are too identity-like. However, we have not shown that all residual solutions
suffer from this problem. This is the purpose of the following theorem:
\begin{theorem}\label{thm:nogo}
Let $\Psi$ be a residual solution in the $KBc$ subalgebra. 
Then the highest level in the dual $\mathcal{L}^-$ level expansion of $\Psi$
is zero or positive.
\end{theorem}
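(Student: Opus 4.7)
The plan is to combine the classification \eq{0gauge} of the $\mathcal{L}^-$ level-$0$ content of residual solutions with an obstruction from the equations of motion that rules out the cancellations in the dual expansion needed to push the dual-leading level strictly below zero.

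By \eq{0gauge}, the $\mathcal{L}^-$ level-$0$ part $\Psi_0$ of any residual solution equals, up to gauge, one of $-cKBc$, $-cK$, $-Kc$, or $-cK-Kc+cKBc$. Each is polynomial in $K$ of degree one and is an $\mathcal{L}^-$ eigenstate of eigenvalue $0$, so it contributes a single term at dual level $0$ in the dual $\mathcal{L}^-$ expansion. A short check using \eq{PsinUn} confirms that $\Psi_0$ is invariant under every gauge transformation of the residual orbit: $U_0$ acts trivially on each residual $\Psi_0$ (as one verifies directly by plugging into $U_0^{-1}(Q+\Psi_0)U_0$), while $[\mathcal{L}^-,Q]=0$ together with level counting shows that $U_n$ for $n\geq 1$ cannot affect the level-$0$ piece. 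Hence every representative of a residual orbit contains this nontrivial dual level-$0$ contribution from $\Psi_0$.

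Writing the full solution as $\Psi=\Psi_0+\Psi_{\geq 1}$ with $\Psi_{\geq 1}$ of $\mathcal{L}^-$ level $\geq 1$, the dual-leading of $\Psi$ can drop below zero only if the $K\to\infty$ asymptotics of $\Psi_{\geq 1}$ exactly cancel those of $\Psi_0$, which forces $\Psi_{\geq 1}$ to resum to a non-polynomial function of $K$. The main obstacle, and the technical heart of the proof, is to show that no such cancellation is compatible with $Q\Psi+\Psi^2=0$ using only the basic relations \eq{KB_id} and \eq{c_id} of the $KBc$ subalgebra. The mechanism is already visible in the residual tachyon vacuum: for the ansatz $\Psi=-c\,f(K)$ with $f(0)=0$, a short computation yields $Q\Psi+\Psi^2 = c\,(f-K)\,c\,f$, and since $c\,g(K)\,c\neq 0$ in the $KBc$ subalgebra for any non-constant $g$, the equations of motion force $f=K$, so that $\Psi=\Psi_0$ exactly. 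I would extend this case analysis to the other three residual orbits using more elaborate ans\"atze involving $B$ and $cBc$ structure dressed with functions of $K$; in each case the identities $c^2=0$, $B^2=0$, $cB+Bc=1$, $[K,B]=0$ should algebraically preclude non-polynomial $K$-dependence in $\Psi_{\geq 1}$ capable of cancelling $\Psi_0$ at dual level $0$. With this obstruction in place, $\Psi_0$ survives at dual level $0$ and the dual-leading level of $\Psi$ is therefore zero or positive.
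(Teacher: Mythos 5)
Your overall strategy---show that the level-$0$ piece $\Psi_0$ from \eq{0gauge} cannot be cancelled in the $K\to\infty$ asymptotics by the resummation of the higher-level tail---is the right idea in spirit, and your worked example for the ansatz $\Psi=-c\,f(K)$ is a correct special case. But the proof has a genuine gap at exactly the point you flag as its ``technical heart'': you defer the general argument to a case analysis over ``more elaborate ans\"atze,'' and this is not carried out and would not obviously terminate. A general element of the $KBc$ subalgebra at ghost number $1$ is $\int dt_1dt_2dt_3\, f(t_1,t_2,t_3)\,\Omega^{t_1}c\,\Omega^{t_2}Bc\,\Omega^{t_3}$, i.e.\ it is controlled by a function of \emph{three} independent variables $F(K_1,K_2,K_3)$, not by a handful of structures each dressed with a single function of $K$. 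No finite list of one-variable ans\"atze covers this space, so the claimed obstruction is asserted rather than proved for the cases that matter (in particular the MNT ghost brane, which is the solution the theorem is really about).

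The paper closes this gap differently: it translates the equation of motion into a functional equation for the Laplace transform, $K_2F(1,3,4)-K_3F(1,2,4)+F(1,2,2)F(2,3,4)-F(1,2,3)F(3,3,4)=0$, specializes it to $K_1=K_2$, $K_3=K_4$, and uses the level-$0$ data \eq{0gauge} (which fixes $F(K,K,K)=-K+\dots$) to show that all higher corrections to the diagonal must vanish, giving the exact identity $F(K,K,K)=-K$, i.e.\ $B\Psi B=-BK$. Since $-BK$ sits at dual level $2$ and the dual level of a product is bounded above by the sum of the factors' levels, $\Psi$ must have highest dual level $\geq 0$. The key move you are missing is the identification of a single gauge-orbit-independent quantity ($B\Psi B$, equivalently the diagonal $F(K,K,K)$) that the equations of motion rigidly fix for \emph{every} residual solution; without something like it, your cancellation obstruction remains a plan rather than a proof. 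Note also that your preliminary claim that $\Psi_0$ is literally gauge-invariant is stronger than needed and not quite what the paper establishes (only the orbit structure at level $0$ is fixed); the paper's argument sidesteps this by working with the exact relation $B\Psi B=-BK$ rather than with any particular representative.
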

\begin{proof} A solution in the $KBc$ subalgebra takes the general form
\begin{equation}\Psi = \int_0^\infty dt_1dt_2dt_3\, f(t_1,t_2,t_3)\,
\Omega^{t_1}c\Omega^{t_2}Bc\Omega^{t_3},\end{equation}
where the function $f(t_1,t_2,t_3)$ is determined by solving the equations 
of motion. Taking the Laplace transform of $f(t_1,t_2,t_3)$ 
gives the function
\begin{equation}F(K_1,K_2,K_3)=\int_0^\infty dt_1dt_2dt_3\,f(t_1,t_2,t_3)
e^{-t_1K_1}e^{-t_2K_2}e^{-t_3K_3}.\end{equation}
To make things easier to read we will sometimes omit the $K$ 
when it appears in the argument of $F$; for example, 
$F(1,2,3)\equiv F(K_1,K_2,K_3)$. With a little algebra we can show that the 
equations of motion imply a functional equation for $F$: 
\begin{equation}K_2\, F(1,3,4)-K_3\,F(1,2,4)+F(1,2,2)F(2,3,4)
-F(1,2,3)F(3,3,4)=0.\label{eq:KEOM}
\end{equation}
This equation depends on four variables $K_1,K_2,K_3,K_4$. Setting 
$K_1=K_2$ and $K_3=K_4$ we find
\begin{equation}\Big[K_2+F(2,2,2)\Big]F(2,3,3) = \Big[K_3+F(3,3,3)\Big]
F(2,2,3).\label{eq:constr}\end{equation}
Let's see what this implies about residual solutions. From \eq{0gauge} 
we find that all four residual solutions have the property 
\begin{equation}F(K,K,K) = -K+...,\label{eq:resconstr}\end{equation}
where $...$ denotes possible higher powers of $K$. 
This implies, for example,
\begin{equation}K_2+F(2,2,2)= p\, K_2^n+...\ \ \ \ \ 
(n\geq 2),
\end{equation}
where $p$ is some constant and $n$ is any power greater than $1$. Note that 
\eq{resconstr} also determines the behavior of $F(2,3,3)$ and $F(2,2,3)$
near $K=0$:
\begin{eqnarray}F(2,3,3)\lineup =a\,K_2 - (a+1)\,K_3+...\nonumber\\
F(2,2,3)\lineup = b\,K_2 -(b+1)\,K_3+...,\end{eqnarray}
where $a,b$ are constants. Plugging these into \eq{constr}, and focusing 
on the leading term in $K$, gives a constraint of the 
constants $p,a,b$ from the equations of motion:
\begin{equation}p\, K_2^n(a\, K_2 - (1+a)K_3) = p\,K_3^n(b\, K_2 -(1-b)K_3).
\end{equation}
When $n\geq 2$, the only solution to this equation is $p=0$. Therefore 
all higher order corrections in \eq{resconstr} vanish, and  
we have the exact equality:
\begin{equation}F(K,K,K)= -K.\end{equation}
This means, in particular, that {\it all} residual solutions in the $KBc$
subalgebra satisfy the identity
\begin{equation}B\Psi B = -BK.\end{equation}
The right hand side is a level $2$ state in the dual $\mathcal{L}^-$ level 
expansion. Since the highest level of a product of states is less than or 
equal to the sum of the highest levels of the states individually, this means
that the highest level in the dual $\mathcal{L}^-$ level expansion of any
residual solution is zero or positive. 
\end{proof}

As a final comment, let us mention an oddity related to the 
characteristic projector \cite{Ian} of residual solutions. The characteristic
projector is the boundary condition changing projector of a singular gauge 
transformation from a solution to itself \cite{Integra}. It is expected to 
give information about the boundary conformal field theory corresponding 
to a classical solution. The simplest example of a singular gauge 
transformation from a residual solution to itself takes the form
\begin{equation}U = Q_{\Psi_0} B = -(1+\gamma_1+\gamma_2)B\d c,
\label{eq:sing}\end{equation}
where we take $\Psi_0$ from \eq{aux}. Surprisingly, this is exactly the 
type of singular gauge transformation for which the boundary condition 
changing projector does not exist. In particular, $U$ is nilpotent, so its 
kernel and image are not linearly independent and do not define the image 
and kernel of a projector. One possible interpretation of this result is 
that the projector does not exist since the solutions do not describe 
a boundary conformal field theory. In fact, all four residual solutions 
have physical cohomology in the universal sector 
(see appendix \ref{app:thm}), which is difficult to reconcile with a 
boundary conformal field theory interpretation.

\section{Discussion}

In conclusion, let us discuss the implications of our analysis for the MNT 
ghost brane solutions. We have shown that these solutions are necessarily 
singular from the perspective of the identity string field. 
In the best case scenario, they have an identity-like term of the form
\begin{equation}cKBc.\label{eq:cKBc}\end{equation}
We claim that such a term renders the action undefined. This requires 
a little explanation. In particular \eq{cKBc} contributes to (for example) 
the cubic term in the action as
\begin{equation}\Tr[(cKBc)^3].\label{eq:trcKBc}\end{equation}
This appears to vanish without ambiguity both because $cKBc$ is 
BRST exact and because $(cKBc)^3$ is a vanishing state. 
Nevertheless, a generic regularization of \eq{trcKBc} does not vanish. 
Consider for example the regularization
\begin{eqnarray}\Tr[(cKBc)^3] \lineup 
= \lim_{\eps\to 0}\Tr\left[
\Big(cKB\,\Omega^{\eps t_1}\,c\,\Omega^{\eps t_2}\Big)^3\right]\nonumber\\
\lineup = \Tr\left[
\Big( cKB\,\Omega^{t_1}\,c\,\Omega^{ t_2}\Big)^3\right].
\end{eqnarray}
This quantity vanishes if $t_1=0$ because $cKBc$ is BRST exact, and it also 
vanishes if $t_2=0$ because $c^2 = 0$, but it does not vanish for generic 
values of $t_1,t_2$.

Therefore if we want to define the MNT solution we need to apply some 
regularization. We can do this, for example, by replacing
$cKBc$ with 
\begin{equation}cKBc \ \rightarrow\  cKBc\frac{1}{1+\eps K}. \label{eq:0reg}
\end{equation}
and taking the $\eps\to 0$ limit. This approach appears to be consistent.
The equations of motion are satisfied in the Fock space and when contracted 
with the solution. The situation might be contrasted to solutions 
with sliver-like singularities \cite{multiple,lumps}, where 
regularization typically produces problems with the equations of motion.
What makes the identity-like singularities of MNT more ``mild'' is 
that (what might be called) the ``dual'' sliver state,
\begin{equation}(1-\Omega)^\infty,\end{equation}
vanishes in the Fock space, whereas the sliver state does not. Nevertheless,
it seems that the MNT ghost brane and related solutions are not completely 
healthy. Perhaps further study will clarify whether a regularization
such as \eq{0reg} truly defines an acceptable solution.

\bigskip

\noindent {\bf Acknowledgments}

\bigskip

\noindent I would like to thank C. Maccaferri for looking over a draft 
of the paper, and I. Sachs for kind hospitality in Munich where much 
of this work was completed. This research was supported by the 
Grant Agency of the Czech Republic under the grant P201/12/G028.

\begin{appendix}
\section{Proof of Theorem \ref{thm:comp}}
\label{app:thm}
In this appendix we prove theorem \ref{thm:comp}, which says
that any pair of solutions in the $KBc$ subalgebra 
which share the same leading term in the $\mathcal{L}^-$ level 
expansion can be related by a gauge transformation. 
We will prove this by induction; We show that if two solutions are 
equal up to level $n$, a gauge transformation can make them equal up to 
level $n+1$. Thus the solutions are gauge equivalent.

Let's start with tachyon vacuum solutions. Suppose we are given 
two tachyon vacuum solutions which are equal up to level $n$ in the 
$\mathcal{L}^-$ level expansion, but differ at level $n+1$:
\begin{eqnarray}
\Psi \lineup = (\Psi_{-1}+\Psi_0+...+\Psi_n)
+\Psi_{n+1}+...,\nonumber\\ 
\Psi'\lineup = (\Psi_{-1}+\Psi_0+...+\Psi_n)+\Psi_{n+1}'+...\ \ .
\end{eqnarray}
Our task is to construct a gauge transformation which will make these solutions
identical up to level $n+1$. Assume that the gauge parameter takes the form
\begin{equation}U = 1+U_{n+2}+...,\end{equation}
where $U_{n+2}$ is a level $n+2$ state and $...$ denotes higher level terms. 
We fix $U_{n+2}$ by requiring that 
\begin{equation}U^{-1}(Q+\Psi')U\label{eq:ppp}\end{equation}
is equal to $\Psi$ up to level $n+1$. It is automatically equal to $\Psi$
up to level $n$ because of the assumed form of $\Psi'$ and $U$. Imposing 
equality at level $n+1$ gives the equation
\begin{equation}\Psi_{n+1}'-\Psi_{n+1}=[\Psi_{-1},U_{n+2}].\label{eq:Ueq}
\end{equation}
Note that $\Psi_{-1}$ is nilpotent because it is proportional to $c$. 
Therefore acting $[\Psi_{-1},\cdot]$ on \eq{Ueq} implies a condition
on $\Psi_{n+1}'-\Psi_{n+1}$: 
\begin{equation}[\Psi_{-1},\Psi_{n+1}'-\Psi_{n+1}]=0.\label{eq:int}
\end{equation}
This condition is implied by the equations of motion, 
and therefore does not need to be separately assumed. Next we introduce 
a string field $A_1$ at level $1$ which satisfies
\begin{equation}[\Psi_{-1},A_1]=1.\label{eq:hom}\end{equation}
If $\Psi_{-1}=\alpha c$, then we can take $A_1=\frac{1}{\alpha}B$. Taking
\eq{int} and \eq{hom} together implies that 
\begin{equation}U_{n+2} = A_1(\Psi_{n+1}'-\Psi_{n+1})\end{equation}
satisfies \eq{Ueq}. Therefore all tachyon vacuum solutions, as defined by 
the leading term in the $\mathcal{L}^-$ level expansion, are 
gauge equivalent. Note that this argument does not depend in an essential
way on the $KBc$ subalgebra. The higher levels can in principle be
composed of more complicated states outside the algebra. All we need is the 
leading level $\Psi_{-1}$ paired with an operator $A_1$ satisfying \eq{hom}.

Now let's consider the five solutions which start at level $0$. Assume
that a pair of such solutions are equal up to level $n$:
\begin{eqnarray}
\Psi \lineup = (\Psi_0+...+\Psi_n)
+\Psi_{n+1}+...,\nonumber\\ 
\Psi'\lineup = (\Psi_0+...+\Psi_n)+\Psi_{n+1}'+...\ \ .
\end{eqnarray}
We want to find a gauge transformation which makes these solutions identical
up to level $n+1$. Assume that the gauge parameter takes the form
\begin{equation}U=1+U_{n+1}+...,\end{equation}
where $U_{n+1}$ is a level $n+1$ state and $...$ denotes higher level terms.
We fix $U_{n+1}$ by requiring that 
\begin{equation}U^{-1}(Q+\Psi')U\end{equation}
is equal to $\Psi$ up to level $n+1$. Equality up to level $n$ follows from
the form of $U$ and $\Psi'$. Imposing equality at level $n+1$ gives the 
equation
\begin{equation}Q_{\Psi_0}U_{n+1} + \Psi_{n+1}'-\Psi_{n+1}=0,\label{eq:Ueq2}
\end{equation}
where $Q_{\Psi_0}$ is the kinetic operator around the level $0$ solution 
$\Psi_0$. Note that $\Psi_{n+1}'-\Psi_{n+1}$ is $Q_{\Psi_0}$-closed 
as a consequence of the equations of motion. We need to show that it
is also $Q_{\Psi_0}$-exact, so that \eq{Ueq2} has a solution for $U_{n+1}$.

To prove this, we use a variant of the standard argument that the 
cohomology of $Q$ is in the kernel of $L_0$. To start, consider the BPZ odd 
component of the $b$-ghost zero mode $\mathcal{B}_0$ 
in the sliver coordinate frame, which we call $\mathcal{B}^-$:
\begin{equation}\mathcal{B}^- \equiv 
\mathcal{B}_0-\mathcal{B}_0^\star.\end{equation}
This is a derivation of the star product and satisfies
\begin{equation}\half\mathcal{B}^- K = B,\ \ \ \ \half\mathcal{B}^- B =0,
 \ \ \ \ \half\mathcal{B}^- c =0.\end{equation}
Also 
\begin{equation}[Q,\mathcal{B}^-] = \mathcal{L}^- .\end{equation}
Now define the operator
\begin{equation}\mathcal{L}^-_{\Psi_0} \equiv [Q_{\Psi_0},\mathcal{B}^-]
=\mathcal{L}^-+\left[\big(\mathcal{B^-}\Psi_0\big),\,\cdot\,\right].
\end{equation}
If $\mathcal{L}^-_{\Psi_0}$ is diagonalizable, then the cohomology
of $Q_{\Psi_0}$ can be found in its kernel. In particular if $\phi$ is a
$Q_{\Psi_0}$ closed eigenstate of $\mathcal{L}^-_{\Psi_0}$ with eigenvalue
$h\neq 0$, then
\begin{equation}\phi =\frac{1}{h}
Q_{\Psi_0}(\mathcal{B}^-\phi),\label{eq:ex}\end{equation}
so $\phi$ is exact.

So our goal is to identify the states in the kernel of 
$\mathcal{L}^-_{\Psi_0}$ for positive levels in the $KBc$ subalgebra, 
and show that they do not lead to cohomology at ghost number $1$. 
For the five solutions at level $0$ in \eq{0gauge}, 
the operator $\mathcal{L}^-_{\Psi_0}$ takes the form
\begin{eqnarray}
\lineup \mathrm{Perturbative\ vacuum:}\ \ \ \ \  \ \ \ \ \ \ \ \
\ \ \ \ \  \ \ \ \ \ \ \ \ \ \ \ \ \  \ 
\half \mathcal{L}^-_{\Psi_0} = \half \mathcal{L}^-,
\nonumber\\
\lineup \mathrm{Residual\ perturbative\ vacuum:}
\ \ \ \ \  \ \ \ \ \ \ \ \ \ \ \ \ \  \ \ 
\half \mathcal{L}^-_{\Psi_0} = \half\mathcal{L}^-,
\nonumber\\
\lineup \mathrm{Residual\ tachyon\ vacuum:}
\ \ \ \ \  \ \ \ \ \ \ \ \ 
\ \ \ \ \  \ \ \ \ \ \ \ \ 
\half \mathcal{L}^-_{\Psi_0} = \half \mathcal{L}^-+[\,cB,\,\cdot\,],
\nonumber\\
\lineup \mathrm{Residual\ conjugate\ tachyon\ vacuum:}\ \ \ \ \,
\ \ \ \ \ \ \ \ 
\half \mathcal{L}^-_{\Psi_0} = \half \mathcal{L}^-+[\,cB,\,\cdot\,],
\nonumber\\
\lineup \mathrm{MNT\ ghost\ brane:}
\ \ \ \ \  \ \ \ \ \ \ \ \ \ \ \ \ \  \ \ \ \ \ \ \ \ 
\ \ \ \ \  \ \ \ \ \ 
\half \mathcal{L}^-_{\Psi_0} = \half \mathcal{L}^-+2[\,cB,\,\cdot\,].
\end{eqnarray}
For the perturbative vacuum and the residual perturbative vacuum 
$\mathcal{L}^-_{\Psi_0}$ simply computes twice the level, and therefore has 
no kernel if the level is positive.  
For the remaining solutions we need to diagonalize $\mathcal{L}_{\Psi_0}^-$
to see what happens. To do this, 
note that $\mathcal{L}^-$ and $[\,cB,\cdot]$ commute and so are simultaneously 
diagonalizable. A standard 
basis of $\mathcal{L}^-$ eigenstates at level $n$ and ghost number $1$ 
in the $KBc$ subalgebra is
\begin{equation}K^p\, c\, K^q \,Bc\, K^r,\ \ \ \ \ (p+q+r=n+1).\end{equation}
However, these are not eigenstates of $[\,cB,\cdot]$. To diagonalize 
$[\,cB,\cdot]$ we use an alternative basis of the states at level $n$:
\begin{eqnarray}\lineup[\,cB,\,\cdot\,]=1:\ \ \ \ \ \ \  
c\,K^{n+1}\, Bc,\\
\lineup [\,cB,\,\cdot\,]=0:\ \ \ \ \ \ \ 
c\, K^p\, B\d c\, K^q\ \ \ \ \ \ \ \ \ \ \ \ \  (p+q=n),\\
\lineup [\,cB,\,\cdot\,]=0:\ \ \ \ \ \ \ 
K^p\, \d c\, K^q\, Bc\ \ \ \ \ \ \ \ \ \ \ \ \  (p+q=n),\\
\lineup [\,cB,\,\cdot\,]=-1:\ \ \ \ \ 
K^p\, \d c\, K^q\, B \d c\, K^r\ \ \ \ \ \ \ (p+q+r=n-1).\label{eq:coh}
\end{eqnarray}
In the first three cases, the $[\,cB,\cdot]$ eigenvalue only adds to the 
level, so we don't find any states in the kernel. In the last case, however,
the $[\,cB,\cdot]$ eigenvalue subtracts, and there is the possibility of 
cohomology. For the residual tachyon vacuum solutions this occurs at level 
$1$ and for the MNT ghost brane at level $2$. At level $1$ there is a single 
state of the form \eq{coh}:
\begin{equation}B(\d c)^2.\end{equation}
This state vanishes as a consequence of the auxiliary 
identity $(\d c)^2=0$, but at any rate is trivial in the cohomology:
\begin{equation}(\d c)^2B = Q_{-cK} \d c B.\end{equation}
Now consider the MNT ghost brane. There are three possible elements of the 
cohomology at level $2$:
\begin{equation}K (\d c)^2 B,\ \ \ \ B(\d c)^2 K,\ \ \ \ \d c KB\d c.
\end{equation}
The first two states vanish assuming $(\d c)^2 = 0$, but are also 
trivial in the cohomology:
\begin{eqnarray}K (\d c)^2 B\lineup = Q_{-cK-Kc+cKBc} \,K\d c B,\nonumber\\
B (\d c)^2 K \lineup = Q_{-cK-Kc+cKBc}\,B \d c K.
\end{eqnarray}
The third state $\d c KB \d c$ does not vanish, and is not 
$Q_{\Psi_0}$-closed if we assume only the defining relations \eq{KB_id} 
and \eq{c_id} of the $KBc$ subalgebra. However, if we assume auxiliary 
identities $\d c KB\d c$ is a nontrivial element of the cohomology
\begin{equation}Q_{-cK-Kc+cKBc}\,\d c KB \d c = 0,\ \ \ \  
\d c KB\d c\neq Q_{-ck-Kc+cKBc}(\mathrm{something})\end{equation}
Therefore $\d c KB \d c$ generates a physically nontrivial
deformation of the MNT ghost brane background. However, the resulting 
solutions require auxiliary identities to satisfy the equations of motion,
and our more limited goal is to classify solutions which satisfy the equations
of motion only by virtue of the defining relations of the $KBc$ subalgebra 
\eq{KB_id} and \eq{c_id}. Then for our purposes $\d c KB \d c$ is not 
$Q_{\Psi_0}$-closed, and in all cases \eq{Ueq2} has a solution $U_{n+1}$. 
This completes the proof.

\begin{table}[t]
\renewcommand{\arraystretch}{1.0}
\begin{center}
\begin{tabular}{|c|c|c|c|c|}
\hline 
& \ \ \ level $-1$\ \ \  & \ \ \ \ level $0$\ \ \ \  & \ \ \ \ level $1$\ \ \ \  & \ \ \ \ 
level $2$ \ \ \ \  \\
\hline 
 & & & & \\
$\displaystyle \begin{matrix}\mathrm{Residual\ perturbative}\\ \mathrm{vacuum}\end{matrix}$  
& --- & $\displaystyle \begin{matrix}cKBc \\ 
Bc \d c \\ c\d c B\end{matrix}$ & --- & --- \\
& & & & \\
\hline
& & & & \\
$\displaystyle \begin{matrix}\mathrm{Residual\ tachyon}\\ \mathrm{vacuum}\end{matrix}$  
& $c$ & $\displaystyle \begin{matrix} 
Bc \d c \\ c\d c B\end{matrix}$ & --- & --- \\
& & & &\\
\hline
& & & & \\
$\displaystyle \begin{matrix}\mathrm{Residual\ conjugate}\\ \mathrm{tachyon\ vacuum}\end{matrix}$  
& $c$ & $\displaystyle \begin{matrix} 
Bc \d c \\ c\d c B\end{matrix}$ & --- & --- \\
& & & & \\
\hline 
& & & & \\
MNT\ ghost\ brane & --- & $\displaystyle \begin{matrix} 
Bc \d c \\ c\d c B\end{matrix}$ & --- & $\d c KB \d c$ \\
& & & & \\
\hline
\end{tabular}
\end{center}
\caption{\label{tab:coh} Elements of the ghost number $1$ cohomology
in the $KBc$ algebra (extended with auxiliary identities) around all four 
residual solutions. We take the 
kinetic operator $Q_{\Psi_0}$ around the level $0$ representatives of 
these gauge orbits. The cohomology element $c$ represents a deformation
of the residual tachyon vacuum into the tachyon vacuum. Likewise, $cKBc$ 
represents a deformation of the residual perturbative 
vacuum into the perturbative vacuum. The remaining cohomology elements follow
from auxiliary identities. The fields $Bc\d c$ 
and $c\d c B$ generate marginal deformations \eq{aux} which connect 
the residual solutions inside a single moduli space. The field 
$\d c KB \d c$ represents an additional 
deformation of the MNT ghost brane whose interpretation is unclear.} 
\end{table}

An byproduct of our proof is a classification of the physical
cohomology in the $KBc$ subalgebra, supplemented with auxiliary identities, 
around all four residual solutions. We list these in 
table \ref{tab:coh}. Note that the perturbative vacuum and (of course)
the tachyon vacuum have no ghost number $1$ cohomology in the $KBc$ 
subalgebra since there are no on-shell vertex operators in the universal 
sector.

The reader may ask what can be said about the classification of solutions
in the $KBc$ subalgebra additionally assuming the full set of relations 
satisfied by $c$. In this case we would have to examine the cohomology 
at positive levels around all of the solutions in \eq{aux}. 
There is a possibility of enhanced 
cohomology whenever $\gamma_1+\gamma_2$ is an integer. Moreover, for these
solutions theorem \ref{thm:nogo} would have to be carefully reconsidered, 
as its proof is based on the functional relation \eq{KEOM} which assumes 
only the basic relations \eq{c_id}. Therefore, we have not excluded the 
possibility that auxiliary identities could produce physically 
interesting solutions, for example multibranes. 

\end{appendix}

\end{document}